\newtheorem{theorem}{Theorem} 
\newcommand{\ema}[1]{\ensuremath{#1}\xspace}
  \newcommand{\ie}{{\it i.e.}\xspace}
\newcommand{\MM}{\ema{\mathcal M}} \newcommand{\PP}{\ema{\mathcal P}}
\newcommand{\alloc}{\ema{\mathcal A}} 
\newcommand{\fail}{\ema{\textsc{Fail}}}
\newcommand{\alive}{\ema{\textsc{Alive}}}
\newcommand{\alivese}[1][\null]{\ema{\textsc{AliveInst}_{#1}}}
\newcommand{\nbm}{\ema{m}}
\newcommand{\machine}{\ema{\mathcal M}} \newcommand{\mach}[1]{\ema{\machine_{#1}}}
\newcommand{\service}{\ema{\mathcal S}} 
\newcommand{\capacity}{\ema{\textsc{Capa}}} 
\newcommand{\demand}{\ema{\textsc{Dem}}} 
\newcommand{\relia}{\ema{\textsc{Rel}}} 
\newcommand{\all}[2]{\ema{\alloc_{#1,#2}}} \newcommand{\allhom}[1]{\ema{\alloc_{#1}}}
\newcommand{\expe}[1]{\ema{\mathbb{E}\left( #1 \right)}} \newcommand{\pro}{\ema{\mathbb{P}}}
\newcommand{\pf}{\ema{\pro_{\mathit{fail}}}}
\newcommand{\logn}[1]{\ema{\mathit{ln}\left( #1 \right)}}
 \newcommand{\lrd}[1][\null]{K_{#1}}
\newcommand{\energy}{\ema{E}} \newcommand{\estat}{\ema{\energy_{\mathrm{stat}}}}
\newcommand{\edyn}{\ema{\energy_{\mathrm{dyn}}}} \newcommand{\pen}{\ema{\alpha}}
\newcommand{\ennl}[1]{\ifx&#1&\ema{\energy^\ilo}\else \ema{\energy^\ilo(#1)}\fi}
\newcommand{\dennl}[1]{\ifx&#1&\ema{\left(\energy^\ilo\right)'}\else
  \ema{\left(\energy^\ilo\right)'(#1)}\fi} \newcommand{\ennu}[1]{\ifx&#1&\ema{\energy^\iup}\else
  \ema{\energy^\iup(#1)}\fi} \newcommand{\dennu}[1]{\ifx&#1&\ema{\left(\energy^\iup\right)'}\else
  \ema{\left(\energy^\iup\right)'(#1)}\fi}
\newcommand{\iup}{\mathrm{(up)}} \newcommand{\ilo}{\mathrm{(low)}}
\newcommand{\xmi}{\ema{x_{\mathrm{min}}}} \newcommand{\xmiu}{\ema{\xmi^{\iup}}}
\newcommand{\xmil}{\ema{x_{\mathrm{min}}^{\ilo}}}
\newcommand{\son}{\ema{\textsc{On}}}
\newcommand{\minrep}{\ema{\textsc{Min-Replication}}}
\newcommand{\minenns}{\ema{\textsc{Min-Energy-No-Shutdown}}}
\newcommand{\minen}{\ema{\textsc{Min-Energy}}}
\newcommand{\hhomo}{\textit{Homogeneous}\xspace} \newcommand{\hstep}{\textit{Step}\xspace}
\newcommand{\glb}{\textbf{lower.bound}\xspace} \newcommand{\gth}{\textbf{theo.homo}\xspace}
\newcommand{\gbh}{\textbf{best.homo}\xspace} \newcommand{\gbs}{\textbf{best.step}\xspace}
\newcommand{\inter}[2][1]{\{ #1 , \dots , #2 \}}
\thanks[all]{Inria, University of Bordeaux}
\begin{document}
\makeRR   

\section{Introduction}
\label{sec.intro}

\subsection{Reliability and Energy Savings in Cloud Computing}
This paper considers energy savings and reliability issues that arise when allocating instances of
an application consisting in a set of independent instances  running as Virtual Machines (VMs) onto
Physical Machines (PMs) in a Cloud computing platform. Cloud
Computing~\cite{zhang2010cloud,armbrust2009above,Buyya2009599,Greenberg:2008:CCR:1496091.1496103}
has emerged as a well-suited paradigm for service providing over the Internet. Using virtualization,
it is possible to run several Virtual Machines on top of a given Physical Machine. Since each VM
hosts its complete software stack (Operating System, Middleware, Application), it is moreover
possible to migrate VMs from a PM to another in order to dynamically balance the load.

In the static case, mapping VMs with heterogeneous computing demands onto PMs with (possibly
heterogeneous) capacities can be modeled as a multi-dimensional bin-packing problem. Indeed, in this
context, each physical machine is characterized by its computing capacity (\ie the number of flops
it can process during one time-unit), its memory capacity (\ie the number of different VMs that it
can handle simultaneously, given that each VM comes with its complete software stack) and its
failure rate (\ie the probability that the machine will fail during the next time period) and each
service comes with its requirement, in terms of CPU and memory demands, and reliability constraints.

In order to deal with capacity constraints in resource allocation problems, several sophisticated
techniques have been developed in order to optimally allocate VMs onto PMs, either to achieve good
load balancing~\cite{van2009sla,calheiros2009heuristic,christopherTPDS} or to minimize energy
consumption~\cite{berl2010energy,beloglazov2010energy}.  Most of the works in this domain have
therefore concentrated on designing offline~\cite{GareyJohnson} and
online~\cite{EpsteinvanStee07binpackresaumg,Hochbaum97} solutions of Bin Packing variants.

Reliability constraints have received much less attention in the context of Cloud computing, as
underlined by Walfredo Cirne in~\cite{CirneInvited}. Nevertheless, related questions have been
addressed in the context of more distributed and less reliable systems such as Peer-to-Peer
networks.  In such systems, efficient data sharing is complicated by erratic node failure,
unreliable network connectivity and limited bandwidth. Thus, data replication can be used to improve
both availability and response time and the question is to determine where to replicate data in
order to meet performance and availability requirements in large-scale
systems~\cite{replication_availability,replication_cirne_1,replication_datagrid,replication_DB,replication_cirne_2}.
Reliability issues have also been addressed by High Performance Computing community. Indeed,
recently, a lot of efforts has been done to build systems capable of reaching the Exaflop
performance~\cite{dongarra2009international,eesi} and such exascale systems are expected to gather
billions of processing units, thus increasing the importance of fault tolerance
issues~\cite{cappello2009fault}. Solutions for fault tolerance in Exascale systems are based on
replication strategies~\cite{ferreira2011evaluating} and rollback recovery relying on checkpointing
protocols~\cite{bougeret2011checkpointing,cappello2010checkpointing}.

This work is a follow-up of~\cite{beaumont:hal-00743524}, where the question of how to evaluate the
reliability of an allocation has been addressed and a set of deterministic and randomized heuristics
have been proposed. In this paper, we concentrate on energy savings issues and we propose proved
approximation algorithms. In order to minimize energy consumption, we assume that we can rely on
sophisticated mechanisms in order to fix the clock frequency of the PMs and we rely on techniques
such as DVFS (see~\cite{280894,pruhsTCS,pow3,pow3IPDPS,pow3ICPP}).
In this context, the capacity of the PM can be expressed as a function of the clock frequency. In
general, the probability of failure may itself depend on the clock frequency
(see for instance \cite{QiZA10a}).
(we will nevertheless not consider this case in this paper and we leave it for future works).

To assess precisely the specific complexity of energy minimization introduced by reliability
constraints in the context of services allocation in Clouds, we concentrate on a simple context,
that nevertheless captures the main difficulties.
First, we consider that the service running on the Cloud platform consists of a number of identical (in terms
  of requirements) and independent instances. Therefore, we do not consider the problems introduced
  by heterogeneity, that have already been considered (see for
  instance~\cite{calheiros2009heuristic,christopherTPDS}). Indeed, as soon as heterogeneity is
  considered, basic allocation problems are amenable to Bin Packing problem and are therefore
  intrinsically difficult. Then, we consider static allocation problems only, in the sense that our goal is to find the
  allocation that optimizes the reliability during a time period (say at the end of the day),
  instead of relying on VM migrations and creations to ensure that a minimal number of instances of
  each service is running whatever the machine failures.
Therefore, our work enables to assess precisely the complexity introduced by machine failures and
service reliability demands on energy minimization.

Throughout this paper, we assume that the characteristics of the applications and their requirements
(in terms of reliability in particular) have been negotiated between a client and the provider
through a Service Level Agreement (SLA). In the SLA, each service is characterized by its demand in
terms of processing capability (\ie the minimal number of instances of VMs that must be running
simultaneously) and in terms of reliability (\ie the maximal probability that the service will not
benefit from this number of instances at some point during the next time period). Equivalently, the
reliability requirement may be negociated through the payment of a fine by the cloud provider if it
fails to provide the required amount of resources.

In both cases, the goal, from the provider point of view, is to determine the cost of reliability,
since a higher reliability will induce more replication and therefore more energy consumption. Our
goal in this paper is to find allocations that minimize energy consumption while enforcing
reliability constraints and therefore to determine the cost of reliability. This cost of reliability
can then be directly translated into a set of (price, fine in case of SLA violation) offers by the
Cloud provider.

\subsection{Notations}
\label{sec.notations}

In this section, we introduce the notations that will be used throughout the paper. Our target cloud
platform is made of $m$ physical machines $\MM_1, \MM_2, \ldots, \MM_m$. As already noted, we assume
that Machine $\MM_j$ is able to handle the execution of $\capacity_j$ instances of service. We also
assume that we can rely on Dynamic Voltage Frequency Scaling (DVFS) mechanism in order to adapt
$\capacity_j$. The energy consumed by machine $\MM_j$ when running at capacity (speed
proportional to) $\capacity_j$ is given by $\energy = \estat(j) + \edyn(j)$, where $\edyn(j) = e_j
\capacity_j ^{\pen}$, with $\pen \geq 2$. This means that the energy consumed by machine $\MM_j$ can be
seen as the sum of a leakage term (paid as soon as the machine is switched on) and of a term that
depends (most of the works consider that $2 \leq \pen \leq 3$) on its running speed. For the sake of
simplicity, we will assume throughout this paper that any $\capacity_j$ can be achieved by Machine
$\MM_j$, as advocated in \cite{contmodes,juurParCo,contfreqIPDPS}.

On this Cloud platform, our goal is to run (all through a given time period, say a day, as defined
in the SLA) a service \service. \demand identical and independent
instances of service \service are required, and the instances run as
Virtual Machines. Several instances can therefore run concurrently and
independently on the same physical machine. We will denote by $\alloc_j$ the number of instances
running on $\MM_j$, that has to be smaller than $\capacity_j$. 
$\sum_j \alloc_{j}$ represents the overall number of running instances of \service. In general,
$\sum_j \alloc_{j}$ is larger than \demand since replication, \ie over-provisioning of services
is used in order to enforce reliability constraints.

More precisely, each machine $\MM_j$ comes with a failure rate $\fail_j$, that represents the
probability of failure of $\MM_j$ during the time period. During the time period, we will not
reallocate instances of service to physical machines but rather provision extra instances for the
service (replicas) that will actually be used if some machines fail. In practice, $\fail_j$ depends
on the clock frequency (although no clear consensus exist in the literature on how, see for instance
\cite{659037,hass1998mitigating,seifert2001historical}
) and therefore on $\capacity_j$. As said previously, we will assume for the results proved in this
paper that $\fail_j$ does not depend on $\capacity_j$.

We will denote by $\alive$ the set of running machines. In our model, at the end of the time period,
the machines are either up or completely down, so that the number of instances 
running on $\MM_j$ is $\alloc_{j}$ if $\MM_j \in \alive$ and 0 otherwise. Therefore, $\alivese =
\sum_{\MM_j \in \alive} \alloc_{j}$ denotes the overall number of running instances  at
the end of the time period and the service is running properly at the end of the time period if and
only if $\sum_{\MM_j \in \alive} \alloc_{j} \geq \demand$.

Of course, our goal is not that all services should run properly at the end of the time
period. Indeed, such a reliability cannot be achieved in practice since the probability that all
machines fail is clearly larger than 0 in our model. In general, as noted in a recent paper of the
NY Times
Data Centers usually over-provision resources (at the price of high energy consumption) in order to
(quasi-)avoid failures. In our model, we assume a more sustainable model, where the SLA defines the
reliability requirement \relia for the service (together with the penalty paid by the Cloud
Provider if \service does not run with at least \demand instances at the end of the
period). Therefore, the Cloud provider faces the following optimization problem:

$\textsc{\bf BestEnergy}(\nbm,\demand,\relia)$: Find the set \son of machines that are on and the clock frequency assigned to Machine $\MM_j$,
  represented by $\capacity_j$ and an allocation $\alloc$ of instances to machines
  $\MM_1, \MM_2, \ldots \MM_m$ 
such that 
(i) $\forall j \in \son, \alloc_{j} \leq \capacity_j$, (ii)
$\PP(\alivese \geq \demand) \geq 1- \relia$, \ie the probability that a least
  \demand instances of \service are running on alive machines after the time period is larger
  than the reliability requirement $1-\relia$, (iii)
and the overall energy consumption $\sum_{j \in \son} \estat(j) + e_j \capacity_j ^{\pen} $
is minimized.

\subsection{Methodology}
\label{sec.methodology}

Throughout the paper, we will rely on the same general approach. Through Section~\ref{minenns} to
Section~\ref{minen} and Section~\ref{algo}, we complicate the problem by considering more general
problem (from the problem of assigning a fractional number of instances onto a fixed number of
resources, to the problem of allocating integer number of instances onto a set of resources to be
determined).

In order to prove claimed approximation ratios, we will rely on the following techniques.

For the lower bounds, we prove that for a service, given the reliability constraints of the
  service and given failure probabilities of the machines, at least a given number of
  instances or at least a given level of energy is needed. These results are obtained through
  careful applications of Hoeffding Bounds~\cite{hoeffding}.

For the upper bounds, we concentrate on two special allocation schemes, namely \hhomo and
  \hstep. In a solution of \hhomo, for each service, we assign to every machine the same number of
  instances (that may be fractional or integral depending on the context), \ie $\forall i, \forall j
  \in \son, \all{i}{j} = \allhom{i}$. In a \hstep solution, we authorize one unit step, \ie $\forall
  i, \forall j \in \son, 0 \leq \all{i}{j} - \allhom{i} \leq 1$.  Using these two allocation
  schemes, we are able to derive theoretical bounds relying on Chernoff
  bounds~\cite{chernoff}. Moreover, the comparison with the lower bound shows that the quality of
  obtained solutions is reasonably high, especially in the case of energy minimization and even
  asymptotically optimal when the size of the platform or the overall volume of service instances to
  be handled, becomes arbitrarily large.

\subsection{Motivating example}
\label{sec.example}

In order to illustrate the objective functions that we consider throughout this paper and the notations, let us consider a service with a demand $\demand=20$ and a reliability request of $\relia=4.5 \cdot
10^{-6}$, that has to be mapped onto a cloud composed of $\nbm=10$ physical machines, whose failure
probability is $\fail=10^{-1}$. Figure~\ref{fig.motex} depicts the various kind of solutions that we
consider in this paper. In terms of minimizing the number of instances, the best solution consists in allocating 10 instances of the service to the first 2 machines and 5 instances to the 8 remaining machines. Therefore, the optimal solutions allocate a total of 60 instances, whereas 20 instances only are required at the end of the day, in order to satisfy reliability constraints. The shape of the optimal solution reflects the complexity of the problem. Indeed, Indeed, it has been proved
in~\cite{beaumont:hal-00743524} that even in the case with a single service and even if the
allocation is given, then estimating its reliability is $\# P'$-complete. The $\#P$ complexity class
has been introduced by Valiant~\cite{Valiant1979189} in order to classify the problems where the
goal is not to determine whether there exists a solution (captured by $\mathit{NP}$ completeness notion) but
rather to determine the number of solutions. In our context, the reliability of an allocation is
related to the number (weighted by their probability) of $\textsc{Alive}$ sets that lead to an
allocation where all service demands are satisfied. In this example, to check that the reliability is larger (in fact equal to) than $\relia$, we can observe that all configurations where at least 4 machines are alive are acceptable (since at least 20 instances are alive as soon as 4 machines are up), together with all configurations with 3 machines, as soon as a machine loaded with $10$ instances is
involved and the solution with only the first two machines alive. By counting the number of such valid configurations (weighted by their probability) leads to the reliability of the allocation. We can notice that the optimal solution involves $60$ instances against around $67$ for best fractional homogeneous
solution, and $64$ in the best step solution. Nevertheless, we will use fractional homogeneous and step solutions in order both to derive approximation algorithms and upper bounds on the number of required resources, and we will see that they are in general close to the optimal.

As far as energy minimization is concerned, we can notice that if we assume $\pen=2$, despite the bad
load-balancing among the machines in the optimal solution for the number of instances, this solution remains optimal. On the other hand, if $\pen=3$ for instance, then the best step solution consumes less energy than the solution minimizing the number of instances. We will prove in this paper that step and homogeneous fractional solutions are in fact asymptotically optimal when the overall demand, or the number of machines involved in the solution, becomes arbitrarily large.

\begin{figure}[h]
\begin{center}
\includegraphics[width=\textwidth]{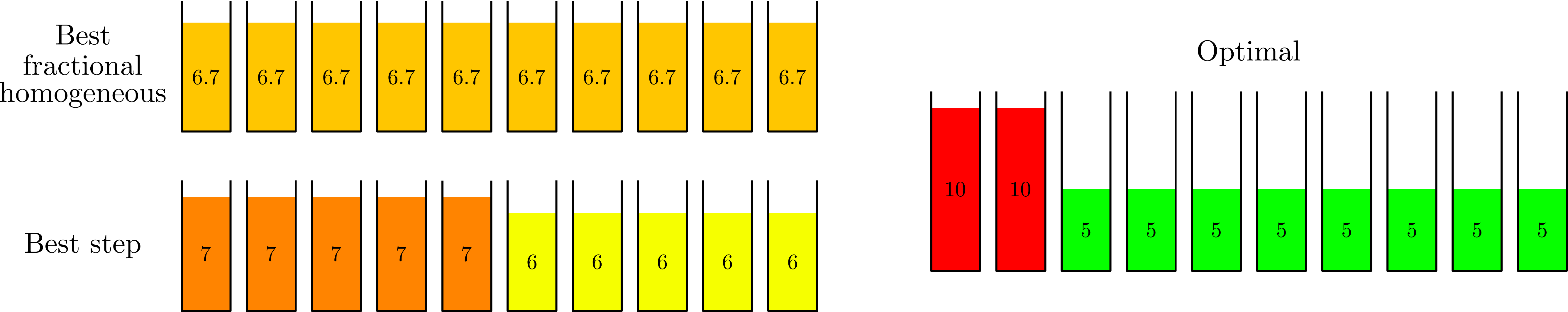}
\end{center}
\caption{Motivating example}
\label{contreexemple}
\label{fig.motex}
\end{figure}

\subsection{Outline of the Paper}
\label{sec.outline}

As we have noticed through the motivating example, \textsc{\bf BestEnergy} is in general difficult. 
Nevertheless, we prove in this paper that even when the allocation is to be determined, it is
possible to provide low-complexity deterministic approximation algorithms, that are even
asymptotically optimal when the sum of the demands becomes arbitrarily large. Another original
result that we prove in this paper is that minimizing the energy (relying on DVFS) induced by
replication is easier than minimizing the number of replicas, whereas in many contexts (see~\cite{ayp}
the non-linearity of energy consumption makes the optimization problems harder. In our context,
approximation ratio are smaller for energy minimization than for classical replica (that would
correspond to makespan or load balancing in other contexts) minimization.

To prove this result, we progressively come to the most general problem through the study of more
simple objective functions. Firstly, we address in Section~\ref{minenns} the case where we are given a
single service, where the set of machines that are switched on is given and where the number of
instances allocated to a machine is allowed to be fractional. Finding allocations such that the
number of instances we place is minimum is denoted as \minenns problem.  Then, we address in
Section~\ref{minen} the more general \minen problem. For \minen, the setting is the same except that
the number of participating machines is to be determined.  Finally, in Section~\ref{algo}, we tackle
the problem of designing more realistic solutions, where the number of instances on each machine must
be an integer. In \hhomo, all participating machines are allocated the same number of instances
whereas in \hstep, the number of instances allocated to a machine can differ by at most one (either
$a$ or $a+1$ for some value of $a$).

\section{Fractional \minenns}
\label{minenns}

\subsection{Lower bound}
\label{sec.lower}

Let us consider the case of a single service to be mapped onto a fixed number of machines when
the objective is to minimize the amount of resources necessary to enforce the conditions defined in the
SLA in terms of quantity (of alive instances at the end of the day) and reliability. The problem
comes into two flavours depending on the resources we want to optimize. Recall that \allhom{j} is
the number of instances of the service initially allocated to machine \mach{j}. In its physical
machines version, the optimization problem consists in minimizing the number of instances allocated to
the different machines, \ie $\sum_j \allhom{j}$. In its energy minimization version, we rely on DVFS
mechanism in order to adapt the voltage of a machine to the need of the instances allocated to
it. In general, energy consumption models assume that the energy dissipated by a processor running at
speed $s$ is proportional to $s^{\alpha}$. Therefore, the energy dissipated by a processor running
\allhom{j} instances will be proportional to $\allhom{j}^{\alpha}$ and the overall objective is to
minimize the overall dissipated energy, \ie $\sum_j \allhom{j}^{\alpha}$.

In order to find the lower bound, let us consider any allocation (where \allhom{j} is the number
service instances initially allocated to machine \mach{j}) and let us prove that if the
amount of resources is too small, then reliability constraints cannot be met. Recall that
\alivese[j] is the number of instances of the service that are alive on machine \mach{j} at the end of
the day. \alivese[j] is thus a random variable equal to \allhom{j} with a probability $1-\fail$
and to $0$ with a probability \fail.

Hence, the expected number of alive instances is given by $\expe{\alivese} = (1-\fail)
\sum_{j=1}^{\nbm} \alivese[j]$. Hoeffding inequality~(see~\cite{hoeffding}) says how much the number
of alive resources may differ from its expected value. In particular, for the lower bound, we will
use it in the following form, that bounds the chance of being lucky, \ie to find a correct
allocation with few instances. More precisely, it states that for all $t>0$:
\[ \pro \left( \alivese \geq \expe{\alivese} + t \right) \leq \exp \left( -2\frac{t^2}{\sum_{j=1}^{\nbm} \allhom{j} ^2} \right) . \]

Let us choose $t = \sqrt{ - \logn{1-\relia} \sum_{j=1}^{n} \allhom{j} ^2 /2} $, so that
$\exp \left( -2\frac{t^2}{\sum_{j=1}^{\nbm} \allhom{j} ^2} \right) = 1 - \relia$. Noting
$\lrd' = \frac{- \logn{1-\relia}}{2}$, we obtain that a
necessary condition on the \allhom{j}'s so that the reliability constraint is enforced is given by
$(1-\fail)\sum_{j=1}^{\nbm} \allhom{j} + \sqrt{ \lrd' \times \sum_{j=1}^{\nbm}  \allhom{j} ^2} \geq \demand. $

As stated in the introduction of this section, we are interested either in minimizing $\sum_j
\allhom{j}$ for resource use minimization, and $\sum_j \allhom{j}^{\alpha}$ for energy
minimization. To obtain lower bounds on these quantities in order to achieve quantitative (number of
alive instances) and qualitative (reliability constraints), we rely on Hoelder's inequality~cite{hoelder}, that
states that if $1/p+1/q=1$, then $\forall a_j, b_j \geq 0,~\sum a_j b_j \leq (\sum a_j^p)^{1/p},~
(\sum b_j^q)^{1/q} $. We assume in the following that $\pen>2$.

With $p=q=2, a_j=b_j=\allhom{j} $, we obtain $\sum \allhom{j}^2 \leq \left( \sum \allhom{j}
\right)^2$, so that \\$(1-\fail)\sum_{j=1}^{\nbm} \allhom{j} + \sqrt{ \lrd' \times \sum_{j=1}^{\nbm}
  \allhom{j} ^2} \leq \left( 1-\fail + \sqrt{ \lrd'} \right) \times
\sum_{j=1}^{\nbm} \allhom{j}.$
Hence a necessary condition in order to satisfy the constraints is given by $ \sum_{j=1}^{\nbm} \allhom{j} \geq \frac{\demand}{1-\fail + \sqrt{ \lrd'}}=\textsc{MinRep}.$\\
Therefore, any solution that satisfies quantitative and qualitative constraints
must allocate at least $\textsc{MinRep}$ instances, whatever the distribution of instances onto
machines is.

With $p=\alpha,~ 1/q=(1-1/\alpha),~ a_j=\allhom{j} $ and $b_j=1$, we obtain $\sum \allhom{j} \leq
\left( \sum \allhom{j}^{\alpha} \right)^{1/\alpha} \nbm^{1-1/{\alpha}}$. \\ Similarly, (remember
that $\alpha > 2$ so that $\alpha/2>1$), with $p=\alpha/2,~ 1/q=(1-2/\alpha),~ a_j=\allhom{j}^2 $
and $b_j=1$, we obtain $\sum \allhom{j}^2 \leq \left( \sum \allhom{j}^{\alpha} \right)^{2/\alpha}
\nbm^{1-2/{\alpha}}$, so that
\[ (1-\fail)\sum_{j=1}^{\nbm} \allhom{j} + \sqrt{ \lrd' \times \sum_{j=1}^{\nbm}
  \allhom{j} ^2} \leq \left( (1-\fail)\nbm^{1-1/{\alpha}} + \sqrt{ \lrd' }
\nbm^{1/2-1/{\alpha}} \right) \times \left( \sum \allhom{j}^{\alpha} \right)^{1/\alpha}. \]
Also, we can derive another necessary condition defined as
\[ \left( \sum \allhom{j}^{\alpha} \right) \geq \left( \frac{\demand}{(1-p)\nbm^{1-1/{\alpha}} +
  \sqrt{ \lrd' } \nbm^{1/2-1/{\alpha}} } \right)^{\alpha} = \textsc{MinEnergy}.\]
Therefore, any solution that satisfies quantitative and qualitative
constraints must consume at least \textsc{MinEnergy}, whatever the distribution of instances onto
machines is.

Note that both results hold true for $\pen=2$.

\subsection{Upper bound -- \hhomo}
\label{sec.upper}

\subsubsection{\minrep}
As explained above, in order to obtain upper bounds on the amount of necessary resources (either in terms of
number of instances or energy), it is enough to exhibit a valid solution (that satisfies the
constraints defined in the SLA). To achieve this, we will concentrate in
this part on homogeneous (fractional) solutions, with an equally-balanced allocation among all
machines (\ie $\forall j, \all{1}{j} = \alloc$).

An assignment is considered as failed when there are not enough instances of the service that are
running at the end of the day, hence $ \pf = \pro \left( \alivese \leq \demand \right)$. From the
homogeneous characteristics of the allocations, we derive that $\alivese = \alloc \times |\alive|$,
then $\pf = \pro\left( |\alive| \leq \frac{\demand}{\alloc} \right)$.
$|\alive|$ can be described as the sum of random independent
variables $\sum_{j=1}^{\nbm} X_j$, where, for all $j \in \inter{\nbm}$, $X_j$ depicts the fact that
machine $\mach{j}$ is alive at the end of the day ($X_j$ is equal to $1$ with probability $1-\fail$, and to
$0$ with probability $\fail$).

Hence, the expected value of $|\alive|$ is given by $\expe{|\alive|} = (1-\fail){\nbm}$. Chernoff
bound~(see~\cite{chernoff}) says how much the number of alive machines may differ from its
expected value. We use in this part Chernoff bounds rather than Hoeffding bounds because the
random variables take their value in $\{0,1\}$ instead of $\{0,\ldots,A\}$ and Chernoff bounds are
more accurate in this case. In particular, for the upper bound, we will use it in the following
form, that bounds the chance of being unlucky, \ie to fail having a correct allocation while allocating a large
number of instances. More specifically,
$ \pro\left( |\alive| \leq (1-\fail-\varepsilon)\nbm  \right) \leq e^{-2 \varepsilon^2 \nbm}. $
As we want to ensure that $\pf \leq \relia$, we choose $\varepsilon$ such that $e^{-2 \varepsilon^2
  \nbm}=\relia$, i.e. $\varepsilon = \sqrt{\lrd/\nbm}$ by noting $\lrd = \frac{- \logn{\relia}}{2}$.
Finally, we obtain a sufficient condition, so that the reliability constraint is fulfilled for the service
$ \alloc \nbm \geq \frac{\demand}{1- \fail - \sqrt{\frac{\lrd}{\nbm} }} = \textsc{MaxRep}.$

Therefore, it is possible to satisfy the SLA with at most \textsc{MaxRep} instances of the
service. Similarly, we can derive an upper bound of the energy needed to enforce the SLA. Indeed,
with the same value of $\alloc$, we obtain
$ E^{(\mathit{homo})} \geq \left( \frac{\demand}{(1- \fail) \nbm^{1-1/\alpha}
- \sqrt{\lrd} {\nbm}^{1/2-1/\alpha}} \right)^{\alpha}= \textsc{MaxEnergy}.$

\subsection{Comparison}

When minimizing the number of necessary instances to enforce the SLA, we obtain
$ \frac{\textsc{MaxRep}}{\textsc{MinRep}} = \frac{1-\fail + \sqrt{ \lrd'} }{1- \fail - \sqrt{\frac{\lrd}{\nbm} }}.$
For realistic values of the parameters, above approximation ratio is good (close to one), since both
$\sqrt{\lrd'}=\sqrt{ \frac{- \logn{1-\relia}}{2}}$ and
$\sqrt{\frac{\lrd}{\nbm}} = \sqrt{\frac{- \logn{\relia}}{2\nbm}}$ are small as soon as $\nbm$ is large.
Nevertheless, the ratio is not asymptotically optimal when $\nbm$ becomes large.

On the other hand, for energy minimization, we have
\[ \frac{\textsc{MaxEnergy}}{\textsc{MinEnergy}} = \left( \frac{(1-\fail)\nbm^{1-1/{\alpha}} + \sqrt{\lrd'}
\nbm^{1/2-1/{\alpha}} }{(1- \fail) \nbm^{1-1/\alpha} - \sqrt{\lrd}
  {\nbm}^{1/2-1/\alpha}} \right)^{\alpha} = \left( \frac{(1-\fail) + \sqrt{ \frac{\lrd'}{\nbm}} }
{(1- \fail) - \sqrt{\frac{\lrd}{\nbm}}} \right)^{\alpha},\]
 so that the ratio tends to 1
when $\nbm$ becomes arbitrarily large. This shows that for energy minimization, homogeneous
(fractional) solutions provide very good results when $\nbm$ is large enough. In the following section,
we prove that an allocation with a large dispersion (in a sense described precisely below) of the number of instances
allocated to the machines cannot achieve SLA constraints with optimal energy.

\subsection{Can optimal solutions be strongly heterogeneous ?}

Above results states that for the minimization of the number of instances and for the minimization
of the energy, homogeneous allocations provide good solutions. Nevertheless,
we know from the example depicted in Figure~\ref{contreexemple} that optimal solutions, for both the minimization of the number of instances and the minimization of
the energy are not always homogeneous.
In the case, of energy minimization, the dispersion of an allocation cannot be too large, as stated more formally in the following theorem.

\begin{theorem}
Let us consider a valid allocation $\allhom{j}$ whose energy is not larger than
$\textsc{MaxEnergy}$, the upper bound on the energy consumed by an homogeneous allocation. Then, if
$V' = \frac{ \sum (\allhom{j}^2)^{\alpha/2}}{\nbm} - \left( \frac{ \sum
  \allhom{j}^2}{\nbm} \right)^{\alpha/2}$
is used as the measure of dispersion of the $\allhom{j}$ (related to the $\alpha/2$-th moment of their square values), then
\[\nbm^{\alpha} V \leq \left( \frac{\demand}{1- \fail - \sqrt{\frac{\lrd}{\nbm} }}\right)^{\alpha} -
\left( \frac{\demand}{1-\fail + \sqrt{\lrd'} }\right)^{\alpha}.\]
\end{theorem}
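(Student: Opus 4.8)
The plan is to rewrite the dispersion measure as a difference of two quantities, then bound the first from above by $\textsc{MaxRep}^\alpha$ using the energy hypothesis and the second from below by $\textsc{MinRep}^\alpha$ using the reliability necessary condition; the two right-hand sides of the claim are exactly $\left(\demand/(1-\fail-\sqrt{\lrd/\nbm})\right)^\alpha=\textsc{MaxRep}^\alpha$ and $\left(\demand/(1-\fail+\sqrt{\lrd'})\right)^\alpha=\textsc{MinRep}^\alpha$, so it suffices to establish one inequality for each term. First I would simplify: since $(\allhom{j}^2)^{\alpha/2}=\allhom{j}^\alpha$, writing $S_\alpha=\sum_j \allhom{j}^\alpha$ and $S_2=\sum_j \allhom{j}^2$ turns the definition into
\[ \nbm^\alpha V = \nbm^{\alpha-1} S_\alpha - (\nbm S_2)^{\alpha/2} , \]
where I treat $V$ and $V'$ as the same quantity (the definition names it $V'$). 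It then remains to prove $\nbm^{\alpha-1} S_\alpha \le \textsc{MaxRep}^\alpha$ and $(\nbm S_2)^{\alpha/2}\ge \textsc{MinRep}^\alpha$.

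For the upper term, the hypothesis provides $S_\alpha \le \textsc{MaxEnergy}$, so the work is purely algebraic: I would verify the identity $\nbm^{\alpha-1}\textsc{MaxEnergy}=\textsc{MaxRep}^\alpha$. Factoring $\nbm^{1-1/\alpha}$ out of the denominator of $\textsc{MaxEnergy}$ rewrites $(1-\fail)\nbm^{1-1/\alpha}-\sqrt{\lrd}\,\nbm^{1/2-1/\alpha}$ as $\nbm^{1-1/\alpha}\big((1-\fail)-\sqrt{\lrd/\nbm}\big)$, whose $\alpha$-th power carries a factor $\nbm^{\alpha-1}$ that cancels the prefactor $\nbm^{\alpha-1}$, leaving exactly $\big(\demand/((1-\fail)-\sqrt{\lrd/\nbm})\big)^\alpha=\textsc{MaxRep}^\alpha$. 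Multiplying the hypothesis by $\nbm^{\alpha-1}$ then gives the desired bound on the first term.

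For the lower term I would reuse the reliability necessary condition $(1-\fail)\sum_j \allhom{j}+\sqrt{\lrd'\,S_2}\ge\demand$ obtained earlier from Hoeffding's inequality, which holds because the allocation is assumed valid. The obstacle here is that this condition mixes the linear sum $\sum_j \allhom{j}$ with $S_2$, whereas I only want a pure lower bound on $S_2$. I would remove the linear sum by Cauchy--Schwarz, $\sum_j \allhom{j}\le \sqrt{\nbm S_2}$, which yields $\sqrt{S_2}\big((1-\fail)\sqrt{\nbm}+\sqrt{\lrd'}\big)\ge \demand$, i.e. $\sqrt{\nbm S_2}\ge \demand/\big((1-\fail)+\sqrt{\lrd'/\nbm}\big)$. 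Since $\nbm\ge 1$ gives $\sqrt{\lrd'/\nbm}\le\sqrt{\lrd'}$, the right-hand side is at least $\textsc{MinRep}$, and raising to the power $\alpha$ (which preserves the inequality on the positive reals) gives $(\nbm S_2)^{\alpha/2}\ge\textsc{MinRep}^\alpha$.

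Subtracting the two bounds produces $\nbm^\alpha V \le \textsc{MaxRep}^\alpha-\textsc{MinRep}^\alpha$, which is the claim. The only genuinely delicate step is the lower term: one must notice that, after the Cauchy--Schwarz reduction, the Hoeffding-based necessary condition reproduces precisely the $\textsc{MinRep}$ constant (up to the harmless $\nbm\ge 1$ relaxation). The upper term, by contrast, is a direct if slightly tedious rewriting of $\textsc{MaxEnergy}$, and the initial simplification of $V'$ hinges only on the observation $(\allhom{j}^2)^{\alpha/2}=\allhom{j}^\alpha$.
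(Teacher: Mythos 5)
Your proof is correct and is essentially the paper's own argument with cleaner bookkeeping: both proofs bound the first term $\sum_j \allhom{j}^{\alpha}/\nbm$ using the hypothesis that the energy is at most $\textsc{MaxEnergy}$ (your identity $\nbm^{\alpha-1}\,\textsc{MaxEnergy} = \textsc{MaxRep}^{\alpha}$), bound the second term $\left(\sum_j \allhom{j}^2/\nbm\right)^{\alpha/2}$ from below by combining the Hoeffding necessary condition with the elimination of the linear sum $\sum_j \allhom{j}$ (your Cauchy--Schwarz step $\sum_j \allhom{j} \leq \sqrt{\nbm \sum_j \allhom{j}^2}$ is exactly the paper's inequality $V \geq V'$, proved there via H\"older), and finish with the same $\nbm \geq 1$ relaxation of $\sqrt{\lrd'/\nbm}$ to $\sqrt{\lrd'}$, which you make explicit where the paper hides it under ``equivalently''. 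Your reading of the statement's $V$/$V'$ notational glitch---treating the claim as being about the defined dispersion $V'$---also matches the paper, whose proof indeed concludes with a bound on $\nbm^{\alpha} V'$.
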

\begin{proof}
Let us first introduce $V = \frac{ \sum \allhom{j}^{\alpha}}{\nbm} - \left( \frac{ \sum \allhom{j}}{\nbm} \right)^{\alpha}$. Then $V\geq V'$. Indeed, $V-V' = \left( \frac{ \sum
  \allhom{j}^2}{\nbm} \right)^{\alpha/2} - \left( \frac{ \sum \allhom{j}}{\nbm} \right)^{\alpha}$
that has the same sign as $\left( \frac{ \sum \allhom{j}^2}{\nbm} \right)^{1/2} - \left( \frac{ \sum
  \allhom{j}}{\nbm} \right)$ that is non-negative by application of Hoelder's inequality.

Moreover, we have seen that a necessary condition (see Section~\ref{sec.lower}) for allocation
$\allhom{j}$ to be valid is given by
$(1-\fail)\sum_{j=1}^{\nbm} \allhom{j} + \sqrt{ \lrd' \times \sum_{j=1}^{\nbm}
  \allhom{j} ^2} \geq \demand,$
what induces
$ (1-\fail) \left( \frac{\textsc{MinEnergy}}{\nbm} - V \right)^{1/\alpha} + \sqrt{ \lrd' \nbm }
 \left( \frac{\textsc{MinEnergy}}{\nbm} - V' \right)^{1/\alpha} \geq \demand $ 
and finally
$ V' < \frac{\textsc{MinEnergy}}{\nbm} - \left( \frac{\demand}{(1- \fail)\nbm - \sqrt{{\lrd}{\nbm} }} \right)^{\alpha}$
 or equivalently
$\nbm^{\alpha} V' \leq \left( \frac{\demand}{1- \fail - \sqrt{\frac{\lrd}{\nbm} }}\right)^{\alpha} -
\left( \frac{\demand}{1-\fail + \sqrt{ \lrd'} }\right)^{\alpha}.$
\end{proof}

\section{Fractional \minen}
\label{minen}

\subsection{Lower bound}

Let us know consider that the number of participating machines is to be determined. In this case, we need to take explicitly the leakage term into account (that was considered as constant in previous section since the number of switched on machine was fixed). In this case, given that $k \in \inter{\nbm}$, the goal is to minimize
\[ \ennl{k} = k \times \estat + k \times \left( \frac{\demand}{(1-\fail)k + \sqrt{\lrd' k}} \right) ^{\pen}.\]

Let $g$ be the function defined on $]0,+\infty[$ by $g(x) = g_t(x) / g_d^\pen(x)$. Let us prove that if
    $g_d$ is non-decreasing, concave, positive, and $g_t$ is non-increasing, convex and positive,
    then $g$ is convex. On the one hand, if $g_d$ fulfills its constraints, then $g_d^{-\pen}$ is
    non-increasing, convex and positive, and on the other hand, the product of two non-increasing,
    convex and positive is a convex function. 

Let us apply above lemma with $g_t(x) = x/x^{\pen/2}$
    (which is convex since $\pen>2$) and $g_d(x) = (1-\fail)\sqrt{x} + \sqrt{\lrd'}$, and deduce
    easily that \ennl{} is convex.

Therefore,\ennl{} admits a unique minimum on $[1,\nbm]$.  Since $\ennl{} \underset{0}{\rightarrow}
+\infty$ and $\ennl{} \underset{\infty}{\rightarrow} +\infty$, \dennl{} is null at some point in
$[0,+\infty[$, and let us define $\xmil$ such that $\dennl{\xmil} = 0$, \ie as
\begin{equation}
\label{eq.nlb}
 \estat + \left( \frac{\demand}{(1 - \fail)\xmil + \sqrt{\lrd' \xmil} } \right) ^\pen \times \left(
 -(\pen-1)(1-\fail) + \left( 1 - \frac{\pen}{2} \right) \sqrt{\frac{\lrd'}{\xmil}} \right) = 0.
\end{equation}
The minimum of function \ennl{} is reached on $[1,\nbm]$ for $\min(\max (\xmil , 1), \nbm)$.

We can also obtain a lower bound on the energy consumption if we restrict the search to integral number of
machines. Due to the convexity of \ennl{}, the minimum is achieved either at $\lceil \min(\max (\xmil , 1),\nbm) \rceil$ or $\lfloor \min(\max (\xmil,1),m) \rfloor$, so that
\[ \energy \leq \min\left(\ennl{ \lceil \min(\max (\xmil , 1),\nbm) \rceil}, \ennl{ \lfloor \min(\max (\xmil,1),m) \rfloor }\right). \]

\subsection{Upper bound -- \hhomo}
\label{sec.upper.var}

The energy consumption of an \hhomo solution on $k$ machines is given by
\[ \ennu{k} = k \times \estat + \frac{1}{k^{\pen-1}} \left( \frac{\demand}{(1-\fail) - \sqrt{\frac{\lrd}{k} }} \right) ^{\pen}. \]

Let us apply again above lemma with $g_t(x) = d^\pen / x^{\pen-1}$ and $g_d(x) = 1-\fail -
\sqrt{\frac{\lrd}{x}}$ to prove that \ennu{} is convex and consequently admits a unique minimum on
$[1,\nbm]$. Moreover, $\ennu{x} \underset{x \rightarrow \infty}{\longrightarrow} +\infty$ and $\ennu{x}
\underset{x \rightarrow 0}{\longrightarrow} +\infty$ so that we can uniquely define $\xmiu$ by
$\dennu{\xmiu}=0$, \ie
\begin{equation}
\label{eq.nhf}
 \estat = \left( \frac{\demand}{(1 - \fail)\xmiu - \sqrt{\lrd \xmiu} } \right) ^\pen \times \left(
 (\pen-1)(1-\fail) + \left( 1 - \frac{\pen}{2} \right) \sqrt{\frac{\lrd'}{\xmiu}} \right) .
\end{equation}

Therefore, we end up with the following upper bound for the energy
\[ \energy \geq \min\left(\ennl{ \lceil \min(\max (\xmil , 1),\nbm) \rceil}, \ennl{ \lfloor \min(\max (\xmil,1),m) \rfloor }\right). \]

\section{Algorithms for the Integral Case}
\label{algo}

In the service allocation problem in Clouds, demands represent a number of virtual machines that need
to be allocated onto physical machines. Therefore, the number of instances allocated to each machine has to be an integer, and we need to adapt above results in order to obtain valid allocation schemes. Moreover, the application of Chernoff bounds enables to find valid solutions (satisfying the reliability constraints) and to obtain theoretical upper bounds, but Chernoff bounds are in general too pessimistic, especially in the case of low number of machines. Hence, we derive in this section a few heuristics that return lower energy than those obtained in Section~\ref{sec.upper}.

\subsection{\minenns}

\subsubsection{Algorithms}

\paragraph{\glb}
In order to evaluate the performance of the heuristics, we rely on the lower bound proved in Section~\ref{sec.lower}. Since this lower bound is valid even among fractional solutions, it is \textit{a fortiori} valid for energy minimization for the integral problem.

\paragraph{\gth}
This algorithm builds a valid solution following the \hhomo policy. We have exhibited such a
solution in Section~\ref{sec.upper} on the fractional problem. In order to enforce the reliability constraint
while turning this solution into an integral one, we have to round the number of instances assigned
to each machine to the next integer, \ie $ \alloc = \left\lceil \frac{\demand}{\nbm(1-
  p - \sqrt{\frac{\lrd}{\nbm}) }} \right\rceil$, leading to an energy consumption of $ \edyn = \nbm \times \left\lceil \frac{\demand}{\nbm(1- p - \sqrt{\frac{\lrd}{\nbm}) }} \right\rceil ^\pen.$

\paragraph{\gbh}

This heuristic finds the best solution (\ie the one that minimizes the energy consumption) following
\hhomo policy. It can be decomposed into an off-line and an on-line phase; the former is executed once and
for all, while the latter is to be run for each reliability constraint.

In the off-line phase, we write a double-entry table, where a row is associated with a number of
machines $\nbm$ and a column corresponds to a reliability requirement $\relia$. The value of a cell
indicates the maximum number $\nbm'$ such that the probability of having $\nbm' \leq \nbm$ alive
machines among the $\nbm$ initial machines at the end of the day is not less than $1-\relia$. Those
values can be obtained thanks to a cumulative binomial distribution.

In the on-line phase, we perform a binary search on the machine capacity, so that we end up with a
valid solution minimizing the energy. Obviously, this solution is the solution that
minimizes the common capacity of the machines, and if the reliability constraint is fulfilled for a
given capacity, it is  \textit{a fortiori} true for a higher capacity. At each step, for a given capacity, we just
have to check, using the table, whether the number of alive instances is large enough.

\paragraph{\gbs}
This heuristic aims at relaxing the homogeneous constraint by finding the best solution on the
following form: there exists $\capacity$ such that the number of instances is either
$\capacity$ or $\capacity-1$. To achieve this goal, it first calls the previous \gbh heuristic that
returns $\capacity$. This ensures that an allocation of $\capacity$ instances per machine
leads to a valid solution, whereas if we allocate $\capacity-1$ instances to each machine, the
reliability constraint is violated. Then, we perform another binary search on the number of machines
that will hold $\capacity -1$ instances, instead of $\capacity$. The validity of a given allocation
is checked thanks to the dynamic programming algorithm described in~\cite{beaumont:hal-00743524}.

\subsubsection{Results}

In Figure~\ref{fig.ae}, we compare the performance of all heuristics under the following settings: $\fail = 10^{-2}$, $\demand = 500$, $\relia = 10^{-6}$, $\pen
= 2$ and $\nbm$ varies between 1 and 600. \glb is depicted in red, \gbs in pink, \gbh in blue, \gth in green and \gbs in red. We can clearly see the aliasing issue of \hhomo solutions on integral problem: during some
periods, it only increases the number of loaded machines without decreasing the overall capacity.

\hstep solutions almost solve completely this issue and softens the \gbh curve, still staying always
above. The ratio between the energy dissipated by \gbh and \glb is under $1.5$ as soon as $\nbm>25$.

\begin{figure}[h]
  \centering
    \includegraphics[width=\textwidth]{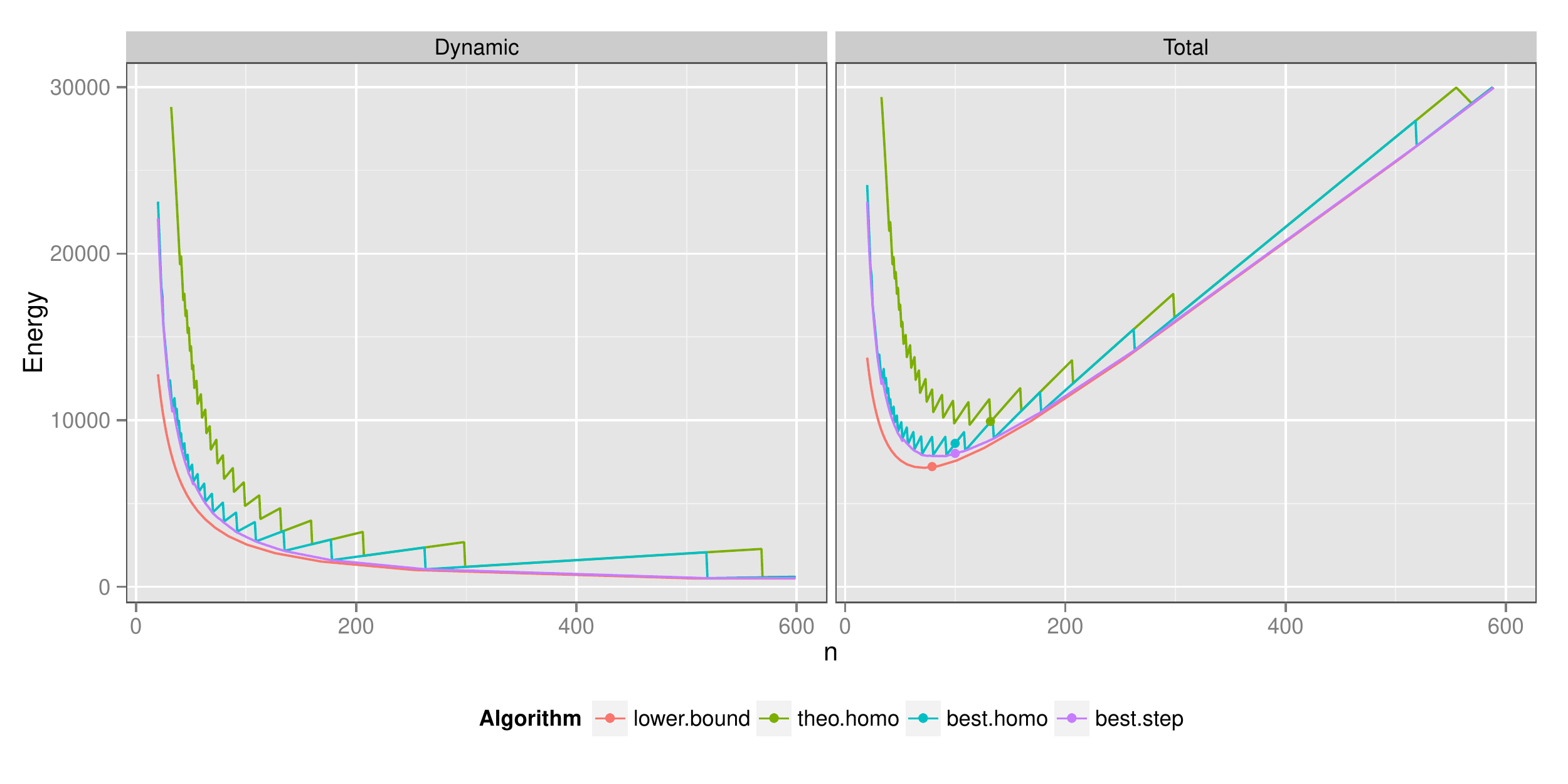}
  \caption{Simulation results for $\fail = 10^{-2}$, $\demand = 500$, $\relia = 10^{-6}$, $\pen= 2$, $1\leq\nbm\leq600$, $\estat=50$.}
  \label{fig.ae}
\end{figure}

\subsection{\minen}

When adding a non-zero static energy, all heuristics and bounds are such that the overall dissipated
energy tends to $+\infty$ if the number of machines tends to $0$ (because of the dynamic energy) or
to $+\infty$ (because of the static energy). There remains to find for each of them a close to the optimal
number of machines for each algorithm.

We have proved the convexity of the energy function returned by \glb. Thus, solving
Equation~\ref{eq.nlb} using binary search, is enough in order to obtain the optimal \nbm. When turning from fractional \hhomo solutions to integral
ones, convexity is lost and there is no easy way to find the optimal \nbm. Therefore, we try all possible number of
machines and keep the one that minimizes the consumed energy.

Concerning \gbh and \gbs, trying all possible number of machines would be too expensive, since
computing the consumed energy for a given \nbm is in general $\#P'$-complete. As the dynamic energy returned by
\gbh or \gbs lies between the dynamic energy given by the lower and upper bounds of the fractional
problem, the number of machines for \gbh and \gbs lies between the solutions of
Equation~\ref{eq.nlb} and Equation~\ref{eq.nhf}. Thus, we choose \nbm for \gbh and \gbs the mean of
previous solutions. The results for $\estat = 50$ are depicted in Figure~\ref{fig.ae}.

\section{Conclusion and Open Problems}
In this problem, we have considered approximation algorithms for minimizing
both the number of used resources and the dissipated energy in the context of service allocation
under reliability constraints on Clouds. For both optimization problems, we have given lower bounds
and have exhibited algorithms that achieve claimed reliability. In the case of energy
minimization, we have even been able to prove that proposed algorithm is asymptotically optimal when
the number of machines becomes arbitrarily large. Such a result is important
since it enables, for the Cloud provider point of view, to associate a price to reliability
(or to fix penalties in case of SLA violation). This work opens many perspectives. First, it seems
possible to improve, relying on different techniques, better approximation ratio in the case of low
number of resources. Then, the extension to several services is easy: all results can be generalized
except the lower bound on the energy consumption. Still we can use the lower bound, obtained for 
resource minimization and extend it to the energy minimization. At last, it would be
interesting to take explicitly into account the memory print of the services, so as to limit the
number of different services that a machine can handle. This would lead to different solution shapes,
by enforcing to limit the number of participating physical machines in the deployment of each
individual service.

\newpage

\bibliographystyle{abbrv}
\bibliography{biblio,biblio2}

\newpage
\tableofcontents

\end{document}